\numberwithin{equation}{section}
\newtheorem{theorem}{Theorem}
\newtheorem{lemma}[theorem]{Lemma}
\newtheorem{corollary}[theorem]{Corollary}
\newtheorem{proposition}[theorem]{Proposition}
\newtheorem{example}{Example}
\newcommand{\thmref}[1]{Theorem~{\rm \ref{#1}}}
\newcommand{\lemref}[1]{Lemma~{\rm \ref{#1}}}
\newcommand{\corref}[1]{Corollary~{\rm \ref{#1}}}
\newcommand{\propref}[1]{Proposition~{\rm \ref{#1}}}
\newcommand{\exmref}[1]{Example~{\rm \ref{#1}}}
\title{Approximating Functional of Local Martingale Under the Lack of
  Uniqueness of Black-Scholes PDE 
}
\begin{document}

\author{
  Qingshuo Song \thanks{Department of Mathematics, City University of
    Hong Kong, {\tt song.qingshuo@cityu.edu.hk}, The research of
    Q.S. is 
    supported in part by the Research Grants Council of Hong Kong
    No. CityU 103310. }  }

\maketitle

\begin{abstract}
 When the underlying stock
  price is a strict local
  martingale process  under an equivalent local martingale measure,   
  Black-Scholes PDE associated with an European option may have
  multiple   solutions. In this paper, we   study an approximation for 
  the  smallest hedging price of such an European option. Our
  results show that a class of rebate barrier
  options can be used for this approximation. Among of them, a
  specific rebate option is also provided 
  with a continuous rebate function, which corresponds to the unique 
  classical solution of the associated parabolic PDE. Such a
  construction makes existing  numerical PDE techniques applicable
  for its computation.     An asymptotic convergence   rate is   also 
  studied when the knocked-out barrier moves to  infinity under
  suitable   conditions.      
\end{abstract}

%\noindent {\bf AMS subject classifications.} 78M20, 65N22, 65N06

\noindent {\bf Keywords.} Black-Scholes PDE, Non-Uniqueness, Financial
bubbles; Local martingales; Convergence rate;

\section{Introduction}\label{intro}

In a financial market equipped with the unique equivalent local martingale
measure (ELMM) $\mathbb{P}$, the smallest hedging price of
an European option is the conditional
expectation  of the payoff with respect to the probability
$\mathbb{P}$, see \cite{FK10, JYC09}. 
In contrast to the probabilistic representation, option price can be also characterized as the
unique solution of its associated Black-Scholes PDE,
provided that PDE has a unique classical solution.

The necessary and sufficient condition for the unique solvability of
the parabolic PDE is that, the underlying stock price
 is Martingale process, see \cite{BX10}. In
other words, if the stock price is a strict local Martingale, then there exists
multiple solutions for Black-Scholes PDE. Moreover, the option price may 
be  one of the many solutions, see \cite{ET09}. The
difference of the multiple solutions of PDE is termed as financial
bubbles, see \cite{CH05, ET09, JPS07} and the references therein.
In this work, we will consider the following problem proposed by
Fernholz and Karatzas \cite{FK10}:
\begin{itemize}
\item [(Q)] {\it How can one find a feasible
  numerical solution convergent  to the 
  option price under the lack of uniqueness of
  Black-Scholes PDE?} 
\end{itemize}

We first examine the existing numerical schemes on CEV model of
Example~\ref{exm:cev}, where the option price can be explicitly
identified. There are typically two kinds of numerical schemes in this
vein \cite{Wil07}. One is Monte Carlo method by discretizing the
probability representation, the other is PDE numerical
method by discretizing the truncated version of PDE. 

Unfortunately, Example~\ref{exm:EM} and Example~\ref{exm:FDM} shows
that classical Euler-Maruyama approximation (for Monte Carlo method)
and finite  difference method (for PDE numerical method) leads to a
strictly larger value than the desired option price. 
Motivated from these two examples, question (Q) boils down
into the following two problems: 
\begin{itemize}
\item [(Q1)] Find a feasible approximation for a Monte Carlo method, and
  its convergence rate;
\item [(Q2)] Find a feasible approximation for PDE numerical method,
  and its convergence rate.
\end{itemize}

In short, this work intends to find a feasible approximation to  the
smallest superhedging price $V(x,t)$. It turns out
that the value function 
can be obtained by a limit of a series of appropriate rebate option
prices, which can be estimated by usual Monte Carlo method, 
see the details in \corref{cor:mc}. 
However, Corollary~\ref{cor:mc} may not be utilized for 
the approximation by PDE numerical method, since it may cause a discontinuity at the corner of the terminial-boundary datum.
Therfore, a specific  rebate option is proposed with its price
continuous up to the boundary, so that its
price corresponds to the unique classical solution of its associated
parabolic PDE. Such a construction makes existing  numerical PDE
techniques applicable  for computations, see Theorem~\ref{thm:contbd}.

The rest of the paper is outlined as follows. In the next section, we
give precise formulation of the problem. Section 3 presents main
results, and related proofs is relegated to Section 4 for the reader's convenience. The last section summarizes the work.

\section{Problem formulation}

Throughout this paper, we use $K$ as a generic constant, and
$\mathbb{R}^+ = (0,\infty)$, $\mathbb{\bar R}^+ = 
\mathbb{R}^+ \cup \{0\}$.
If $A$ is a subset of $\mathbb{R}\times [0,T]$, then  $C(A)$ denotes
the set of all continuous real functions on $A$,  $C^{2,1}(A)$
denotes a collection of all functions $\varphi: A \mapsto \mathbb{R}$
such that $\varphi_{xx}$ and $\varphi_t$  belong to $C(A)$.
$D_\gamma(A)$ denotes the set of all measurable  functions $\varphi: A
\to \mathbb{\bar R^+}$ satisfying growth condition
\begin{equation}
  \label{eq:grow}
  \varphi(x,t) \le K(1+|x|^\gamma), \ \forall (x,t) \in A.
\end{equation}
$C_\gamma(A) = C(A) \cap D_\gamma(A)$ denotes the set of all
continuous functions satisfying $\gamma$-growth.
We also denote the parabolic domain $Q:= \mathbb{R}^+ \times
(0,T)$,  truncated domain $Q_\beta:= (0,\beta) \times (0,T)$, and
$Q_\beta^\alpha:= (\alpha,\beta) \times (0,T)$ for $0<\alpha<\beta$.

We consider a single stock in the presence of the unique equivalent
local 
martingale measure (ELMM) $\mathbb{P}$, under which the deflated price
process follows  
\begin{equation}
  \label{eq:sde}
  d X(s) = \sigma (X(s)) dW(s), \ X(t) = x \ge 0,
\end{equation}
where  $W$ is a standard
Brownian motion with respect to a given probability space $(\Omega,
\mathcal{F}, \mathbb{P}, \mathbb{F} = \{\mathcal{F}_s: s\ge t\})$
satisfying usual conditions.  
We impose the following two conditions on $f$ and $\sigma$: 
  \begin{enumerate}
  \item [(A1)]
    $\sigma$ is locally H\"older continuous with exponent $\frac 1 2$
    satisfying 
    $\sigma(x)>0$ for all $x\in \mathbb{R}^+$,  $\sigma(0) = 0$. 
  \item [(A2)] $f: \mathbb{\bar R}^+  \to \mathbb{\bar R}^+$ is
    a $C_\gamma(\mathbb{\bar R}^+)$  
    payoff function for some $\gamma \in [0,1]$.  
  \end{enumerate}
By \cite[5.5.11]{KS98},  the assumption (A1) on $\sigma$ ensures there
exists a unique strong solution of \eqref{eq:sde} with absorbing state
at zero.

For a contingent claim $f(X(T))$ with a fixed maturity $T>0$, the
smallest hedging price has the form of
\begin{equation}
  \label{eq:V}
  V(x,t) = \mathbb{E}_{x,t}[f(X(T))] := \mathbb{E}[f(X^{x,t}(T)) |
  \mathcal{F}_{t}]. 
\end{equation}
In the above, we suppress the superscripts $(x,t)$ in
$X^{x,t}$, and write $\mathbb{E}_{x,t}[\ \cdot\ ]$ to indicate the
expectation with respect to $\mathbb{P}$ computed under these initial
conditions.  

Recently, \cite{ET09}
shows that the value function $V$ of \eqref{eq:V} is  the
$C^{2,1}(Q)\cap C(\overline Q)$ solution of    $BS(Q, f)$, where
$BS(Q, f)$ refers to Black-Scholes equation
\begin{equation}
  \label{eq:bsq}
  BS(Q, f) \left\{
    \begin{array}{ll}
      (E) & \displaystyle u_t + \frac 1 2 \sigma^2(x) u_{xx} = 0
      \hbox{ on } Q =   \mathbb{R}^+ \times (0,T) \\
      (TD) &   \displaystyle u(x,T) = f(x)  \hbox{ on }  \forall x\in
      (0,\infty) \\ 
      (BD) & \displaystyle u(0,t) = f(0) \hbox{ on } \forall t\in
      (0,T]. 
    \end{array} \right. 
\end{equation}

However, the next example taken from \cite{CH05} shows that the value
function $V$ may not be the unique solution of $BS(Q,f)$ when the
deflated price process $X$ is a strict local martingale. 
\begin{example}[CEV model]\label{exm:cev}
  Suppose the stock price follows a strict local martingale
  process $dX(s) = X^2 dW(s)$, with the initial $X(t) =
  x>0$. Consider $V(x,t) = \mathbb{E}_{x,t}[X(T)]$. Then, 
  $V$ can be computed explicitly as
  \begin{equation}
    \label{eq:cev1}
    V(x,t) = x \Big(1 - 2 \Phi\Big(- \frac{1}{x\sqrt{T-t}}\Big) \Big).
  \end{equation}
  One can verify $V$ satisfies  $BS(Q,f)$. 
  Another trivial solution is $u(x,t) = x$. 
\end{example}

Now, $V$ is one of the possibly multiple solutions of $BS(Q,f)$.  With
the existence of multiple solutions to PDE $BS(Q,f)$, our question is
\begin{itemize}
\item If we use any of existing PDE numerical methods on
  $BS(Q,f)$, or any of existing Monte-carlo methods on \eqref{eq:V},
  does it converge to the desired value function among multiple
  solutions of PDE?
\end{itemize}

Unfortunately,  the answer is NO in general. In fact, the next
trivial example shows 
that the classical Monte Carlo method by Euler-Maruyama approximation
does {\it not} lead to the desired value $V(x,t)$ of \eqref{eq:cev1}
of \exmref{exm:cev}.  
\begin{example} \label{exm:EM}
  Consider the strong Euler-Maruyama (EM) approximation  to
  \exmref{exm:cev} with step size $\Delta$,
  $$X^\Delta_{n+1} = X_n^\Delta + \sigma(X_n^\Delta) (W(n\Delta +
  \Delta) - W(n\Delta)), \quad X^\Delta_t = x.$$
  Let $X^\Delta(\cdot)$ be the piecewise constant interpolation of
  $\{X^\Delta_n: n\ge 0\}$, i.e.
  \begin{equation}
    \label{eq:Xdelta}
    X^\Delta(s) = X^\Delta_{[s/\Delta]}, \quad \forall s>0. 
  \end{equation}
  Since $\{X^\Delta_n: n\ge 0\}$ is a martingale, the approximated value
  function simply leads to a wrong value
  $$V_\Delta(x,t) := \mathbb{E}_{x,t} [X^\Delta (T)] =
  \mathbb{E}_{x,t}[X^\Delta_{(T-t)/\Delta}] = x > V(x,t). \quad \Box$$ 
\end{example}

Similar to Monte-carlo method, One can also prove that the finite
difference method on PDE $BS(Q,f)$ also leads to a wrong value.
\begin{example} \label{exm:FDM}
  Black-Scholes PDE associated to the CEV model \exmref{exm:cev} is
  $BS(Q,f)$ of  \eqref{eq:bsq} with $\sigma(x) = x^2$ and $f(x) = x$.
  To use the finite difference method (FDM)  in the above
  PDE, we shall truncate the domain and put artificial boundary
  conditions on the upper barrier $\{(x,t): 0\le t \le T\}$ for large
  enough $\beta>0$. 
  As suggested by \cite{Wil07}, we impose boundary
  conditions, which {\it asymptotes the option price}, i.e.
  $$u(\beta,t) = \beta, \ \forall 0\le t\le T.$$
  With step size $\Delta^2$ in space variable $x$ and $\Delta$ in time 
  variable $t$, one can easily check upward finite difference scheme
  backward in time yields trivial numerical solution $u^\Delta(x,t) =
  x$ for any small $\Delta>0$. $\Box$
\end{example}

To the end, our work is to resolve the following question: 
How can one find a feasible approximation of this value function $V$
of \eqref{eq:V} in both Monte Carlo method and FDM? What is the
convergence rate?

\section{Main result}

In this subsection, we present the main results, and the proofs will
be relegated to the next section. 

\subsection{Approximation by Monte Carlo method}
We consider the following up-rebate option prices:
Suppose the up barrier is given by a positive constant
$\beta>x>0$ and  stopping time $\tau^\beta$ (suppressing the initial condition
$(x,t)$) is the
first hitting time of the stock price $X(s)$ to the barrier
$\beta$, i.e.
\begin{equation}
  \label{eq:taubeta}
  \tau^{x,t,\beta} = \inf\{s>t: X^{x,t}(s) \ge \beta \} \wedge T.
\end{equation}
For some function $g$, let its payoff at
$\tau^{x,t,\beta}$ consist of  
\begin{enumerate}
\item rebate payoff $g(\beta)$, if $\tau^\beta<T$;
\item otherwise,  terminal payoff $f(X(T))$.
\end{enumerate}
Then, the rebate option price $V^\beta$ is of the form 
\begin{equation}
  \label{eq:appr1}
  V^{\beta}(x,t) = \mathbb{E}_{x,t} [
  g(\beta) {\bf 1}_{\{ \tau^\beta <  T\}} +
  f(X(T)) {\bf  1}_{\{\tau^\beta = T\}}],
\end{equation}
$V^\beta$ is a functional of $f$ and $g$, and we may write
$V^{\beta,g,f}$ instead of $V^{\beta}$ whenever it needs an explicit
emphasis on its dependence of $f$ and $g$. 
It turns out that the the option price value of \eqref{eq:V} can be
obtained by a limit of a series of appropriate rebate option prices.
\begin{theorem} 
  \label{thm:conv}
  Assume (A1-A2). Suppose the rebate payoff $g$  satisfies one of 
  the following two conditions: 
  \begin{enumerate}
  \item $g(x)$ is  of sub-linear growth, i.e. $\lim_{x\to \infty}
    \frac{g(x)}{x} = 0$;  
  \item $g(x)$ is of linear growth, i.e.
    $\lim\sup_{x\to \infty} \frac{g(x)}{x} <
    \infty$, and $X^{x,t}$ is a martingale. 
  \end{enumerate}
  Then, we have the convergence for $V^\beta$ of \eqref{eq:appr1},
  $$\lim_{ \beta \to \infty} V^{\beta}(x,t)  = V(x,t).$$
  In addition, if $g \in D_\eta(\mathbb{R}^+)$ with $\gamma
  \wedge \eta <1$, then the convergence rate is the order of $1-(\gamma
  \vee \eta)$ as $\beta \to \infty$, i.e.  
  \begin{equation}
    \label{eq:rate}
    |(V - V^{\beta})(x,t)| \le K \beta^{-(1-(\gamma \vee \eta))},
    \ \forall x< \beta.
  \end{equation}
\end{theorem}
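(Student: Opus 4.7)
The natural starting point is to rewrite the difference. Since the two payoffs coincide on the event $\{\tau^\beta = T\}$, one has
\[
V(x,t) - V^\beta(x,t) = \mathbb{E}_{x,t}\bigl[(f(X(T)) - g(\beta)){\bf 1}_{\{\tau^\beta < T\}}\bigr],
\]
so it suffices to control the two terms $\mathbb{E}_{x,t}[|f(X(T))|{\bf 1}_{\{\tau^\beta < T\}}]$ and $|g(\beta)|\,\mathbb{P}_{x,t}(\tau^\beta < T)$ separately. The cornerstone observation is that $X$ is a non-negative continuous local martingale, so the stopped process $X(\cdot \wedge \tau^\beta)$ is bounded by $\beta$ and hence a true martingale. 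Combined with continuity of paths (so $X(\tau^\beta) = \beta$ on $\{\tau^\beta < T\}$), optional stopping yields $x = \mathbb{E}_{x,t}[X(T \wedge \tau^\beta)] \geq \beta\,\mathbb{P}_{x,t}(\tau^\beta < T)$, giving the first key estimate $\mathbb{P}_{x,t}(\tau^\beta < T) \leq x/\beta$; in particular ${\bf 1}_{\{\tau^\beta < T\}} \to 0$ almost surely because the trajectory of $X$ is bounded on $[t,T]$.

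For the plain convergence, I would handle the $f$-term by dominated convergence: under (A2) with $\gamma \leq 1$ one has $|f(X(T))| \leq K(1 + X(T))$, and the supermartingale property $\mathbb{E}_{x,t}[X(T)] \leq x$ supplies an integrable dominator independent of $\beta$. For the $g$-term, in the sub-linear case the bound $|g(\beta)|\,\mathbb{P}_{x,t}(\tau^\beta < T) \leq (g(\beta)/\beta) \cdot x$ vanishes directly. In the linear case with $X$ a true martingale, I would upgrade the supermartingale identity using $\mathbb{E}_{x,t}[X(T)] = x$ to obtain
\[
\beta\,\mathbb{P}_{x,t}(\tau^\beta < T) = x - \mathbb{E}_{x,t}[X(T){\bf 1}_{\{\tau^\beta = T\}}] = \mathbb{E}_{x,t}[X(T){\bf 1}_{\{\tau^\beta < T\}}] \to 0
\]
by dominated convergence (using integrability of $X(T)$), which is exactly what is needed to absorb the linear growth of $g$.

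For the rate under $\gamma \wedge \eta < 1$, I would apply H\"older's inequality with conjugate exponents $(1/\gamma, 1/(1-\gamma))$ (handling $\gamma = 0$ trivially by boundedness of $f$) to get
\[
\mathbb{E}_{x,t}[X(T)^\gamma {\bf 1}_{\{\tau^\beta < T\}}] \leq \mathbb{E}_{x,t}[X(T)]^\gamma\,\mathbb{P}_{x,t}(\tau^\beta < T)^{1-\gamma} \leq x\,\beta^{-(1-\gamma)},
\]
and similarly $|g(\beta)|\,\mathbb{P}_{x,t}(\tau^\beta < T) \leq K(1+\beta^\eta)\cdot x/\beta = O(\beta^{-(1-\eta)})$ via $\eta$-growth of $g$. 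Summing the two yields the claimed bound $K\beta^{-(1-(\gamma \vee \eta))}$. The main obstacle is the dichotomy in the hypothesis on $g$: a direct computation shows $\lim_\beta \beta\,\mathbb{P}_{x,t}(\tau^\beta < T) = x - \mathbb{E}_{x,t}[X(T)]$, which is $0$ \emph{precisely when} $X$ is a true martingale. Thus for strict local martingales the product $\beta\,\mathbb{P}_{x,t}(\tau^\beta<T)$ only stays bounded but does not vanish, so a linear rebate cannot be accommodated without the martingale assumption; this is exactly the reason the two separate cases on $g$ appear in the statement.
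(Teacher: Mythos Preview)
Your argument is correct and reaches the same conclusion, but the route differs from the paper's in one notable place. Both proofs hinge on the optional-stopping estimate $\mathbb{P}_{x,t}(\tau^\beta<T)\le x/\beta$ (with equality of the limit $\beta\,\mathbb{P}(\tau^\beta<T)\to x-\mathbb{E}[X(T)]$ characterizing the martingale case), and both split $V-V^\beta$ into an $f$-contribution and a $g$-contribution. The divergence is in the rate for the $f$-contribution: you keep the terminal variable $f(X(T))$ and bound $\mathbb{E}[X(T)^\gamma\,{\bf 1}_{\{\tau^\beta<T\}}]$ by H\"older with exponents $(1/\gamma,1/(1-\gamma))$, whereas the paper first invokes the strong Markov identity $V=V^{\beta,V}$ (its Lemma on the dynamic programming principle) to replace $f(X(T))$ by $V(\beta,\tau^\beta)$, and then uses the growth estimate $V(\beta,\cdot)\le K(1+\beta^\gamma)$ established separately for the value function. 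Your approach is more self-contained---it avoids both the DPP lemma and the a priori $C_\gamma$-regularity of $V$---at the cost of the extra H\"older step; the paper's approach is more structural and makes explicit that the relevant growth is that of $V$ at the barrier rather than of $f$ at the terminal time. For the convergence of the $f$-term you use dominated convergence with majorant $K(1+X(T))$, while the paper uses monotone convergence on $f(X(T)){\bf 1}_{\{\tau^\beta=T\}}$; both are valid under (A2).
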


Theorem~\ref{thm:conv} shows that Monte Carlo method on the expression
$V^{\beta}$ of \eqref{eq:appr1} actually leads to correct estimation of the option price $V$, provided that the rebate function $g$ is appropriately chosen. Among the many choices of $g$, the simplest one shall be taking $g\equiv 0$. Next result summarizes the above comments.

\begin{corollary}\label{cor:mc}
Let $$V^{\beta,0} = \mathbb E_{x,t} [f(X(T)) {\bf 1}_{\{\tau^{\beta} = T}].$$
Then, $\lim_{\beta \to \infty} V^{\beta,0}(x,t) = V(x,t)$ point wisely in $x$ and $t$. Furthermore, if $f(x) = O(x^{\gamma}$ with some constant $\gamma<1$, then its convergence rate is 
$$|(V- V^{\beta})(x,t)| = O(\beta^{-1+\gamma}), \hbox{ as } \beta \to \infty.$$
\end{corollary}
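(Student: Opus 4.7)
The plan is to obtain Corollary~\ref{cor:mc} as an immediate specialization of \thmref{thm:conv} to the constant choice $g \equiv 0$, verifying both that the hypotheses are met and that the stated rate matches the general formula.

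First, I would observe that with $g \equiv 0$ the general rebate-option formula \eqref{eq:appr1} reduces exactly to
\[
V^{\beta,g\equiv 0,f}(x,t)
=\mathbb{E}_{x,t}\bigl[f(X(T))\,\mathbf{1}_{\{\tau^\beta=T\}}\bigr]
=V^{\beta,0}(x,t),
\]
since the rebate term $g(\beta)\mathbf{1}_{\{\tau^\beta<T\}}$ vanishes identically. Thus it suffices to check that $g\equiv 0$ falls under the scope of \thmref{thm:conv}.

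For the pointwise convergence, the zero function satisfies $\lim_{x\to\infty} g(x)/x = 0$, which is precisely condition~(1) of \thmref{thm:conv} (sub-linear growth); no martingale hypothesis on $X^{x,t}$ is needed. Hence the theorem yields $\lim_{\beta\to\infty} V^{\beta,0}(x,t)=V(x,t)$ for every $(x,t)$.

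For the convergence rate, I would note that $g\equiv 0 \in D_\eta(\mathbb{R}^+)$ for any $\eta\ge 0$, so we may take $\eta = 0$. The hypothesis $f(x)=O(x^\gamma)$ with $\gamma<1$ places $f$ in $D_\gamma(\mathbb{\bar R}^+)$ as required by (A2), and the condition $\gamma\wedge\eta<1$ of \thmref{thm:conv} is trivially satisfied since $\eta=0$. The theorem's rate estimate \eqref{eq:rate} then gives
\[
|(V-V^{\beta,0})(x,t)|\;\le\; K\,\beta^{-(1-(\gamma\vee\eta))} \;=\; K\,\beta^{-(1-\gamma)},
\qquad x<\beta,
\]
which is precisely the claimed $O(\beta^{-1+\gamma})$ rate. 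Since the entire argument is a direct substitution into a stronger result already in hand, there is no genuine obstacle here; the only thing to be careful about is the bookkeeping of the growth exponents $\gamma$ and $\eta$ and confirming that the ``$\wedge$'' condition in the hypothesis of \thmref{thm:conv} does not exclude the trivial choice $\eta=0$.
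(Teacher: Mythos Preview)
Your proposal is correct and matches the paper's own treatment: the corollary is presented there explicitly as the specialization of \thmref{thm:conv} to the simplest choice $g\equiv 0$, with no separate argument given. Your verification of the growth-exponent bookkeeping ($\eta=0$, $\gamma\wedge\eta=0<1$, $\gamma\vee\eta=\gamma$) is exactly what is needed to read off the stated rate from \eqref{eq:rate}.
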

In this below, we fix Monte Carlo method in Example~\ref{exm:EM} based on
convergence result in \thmref{thm:conv}.
\begin{example}
  \label{exm:EMT}
  Let's extend $f:\bar{\mathbb{R}}^+ \mapsto \mathbb{R}$ to
  $f:{\mathbb{R}} \mapsto \mathbb{R}$ by $f(x) = f(0)$ for $x<0$. With
  $X^\Delta(\cdot)$ of \eqref{eq:Xdelta}, let $\tau^\beta_\Delta$ 
  be the first hitting time of $X^\Delta$ to the barrier $\beta$.
  The modified Monte Carlo scheme to approximate $V(x,t)$ of
  Example~\ref{exm:EM} is given by,
  $$V^{\beta}_\Delta(x,t) := \mathbb{E}_{x,t} \Big[f(X^\Delta(T)) {\bf
    1}_{\{\tau^\beta_\Delta\ge T\}}\Big].
  $$
  \corref{cor:mc} implies that the rebate option price is convergent to the smallest hedging price, i.e.
  $$V^{\beta}(x,t):= \mathbb{E}_{x,t} [f(X(T)) {\bf
    1}_{\{\tau^\beta \ge T\}}]
  \to V(x,t) \quad \hbox{ as } \beta\to \infty.$$ 
  Note
  that $$X^\Delta(T)  \to X(T) \hbox{ a.s.  and } | f(X^\Delta(T)) {\bf
    1}_{\{\tau^\beta_\Delta\ge T\}}| \le  \max_{0\le x \le
    \beta} f(x).$$
  Hence, Bounded Convergence Theorem also implies that
  \begin{equation}
    \label{eq:exm:EMT1}
    V^{\beta}_\Delta(x,t) \to   V^{\beta}(x,t), \quad
    \hbox{ as } \Delta \to 0.
  \end{equation}
  As a result, the option price $V(x,t)$ of \eqref{eq:V} can be
  approximated by using Monte Carlo method on the above rebate
  options, i.e.
  $$\lim_{\beta \to \infty} \lim_{\Delta \to 0} V^{\beta}_\Delta (x,t)
  = V(x,t). $$ 
\end{example}

  Regarding the estimation by Monte Carlo method, one may take the 
  simplest choice $g(\beta) \equiv 0$ for the
  rebate payoff as of Corollary~\ref{cor:mc}, see also \cite{ELTS08p}. 
  However, Corollary~\ref{cor:mc} can not be utilized the approximation by PDE numerical method, since it may cause a discontinuity at the corner
  $(\beta, T)$  
  of the terminial-boundary datum when $f(\beta) \neq 0$.

\subsection{Approximation by PDE numerical method}

For the above Monte Carlo method on \eqref{eq:appr1}, yet another
to be mentioned is a drawback in  the computation by PDE numerical
methods due to the possible discontinuity of the boundary-terminal
data. 

To illustrate this issue, we write Black-Scholes PDE associated
to  the rebate option price $V^{\beta}(x,t)$ of \eqref{eq:appr1},
\begin{equation}
  \label{eq:pdeT}
  \left\{
    \begin{array}{ll}
      \displaystyle u_t + \frac 1 2 \sigma^2(x) u_{xx} = 0 \hbox{ on }
      Q_\beta := (0,\beta) \times (0,T);\\  
      u(x, T) = f(x), \ \forall x\in [0, \beta]; \\
      u(0,t) = f(0), \ u(\beta, t) = g(\beta),  \ \forall t\in (0,T).\\ 
    \end{array} \right.
\end{equation}
Note that,  PDE \eqref{eq:pdeT} has a
discontinuous  corner at the point $(\beta, T)$
if $g(\beta) \neq f(\beta)$. Also recall that, the choice of
$g(\beta) = f(\beta)$ may not be possible, like in CEV model of
Example~\ref{exm:cev}.  

It is well known that, if $g(\beta) \neq f(\beta)$ and the
boundary-terminal data is discontinuous, then
one can not expect the unique solution of \eqref{eq:pdeT}
continuous up to the boundary. Furthermore, 
the discontinuity and the singularity at the corner propagate the
numerical errors quickly throughout its entire domain for the
numerical PDE methods, such as finite element method (FEM) or finite
difference method (FDM), see more discussions in \cite{SYZ07} and the
references therein. Therefore, the unique solvability and the
regularity of the solution are crucial to make use of the existing PDE
numerical methods.  

To avoid this error propagation due to the discontinuity of the
boundary-terminal data, we provide an alternative choice to
\eqref{eq:appr1} by revising the terminal payoff: Consider a rebate
option of barrier $\beta$ with 
\begin{enumerate}
\item zero rebate payoff, i.e. $g(\beta) \equiv 0$;
\item and a revised terminal payoff
  \begin{equation}
    \label{eq:fbeta}
    f^\beta(x) = f(x) {\bf 1}_{\{x\le \beta/2\}} + \frac{2 f(x) (\beta -
      x)}{\beta} {\bf 1}_{\{\beta/2 <x \le \beta\}}.  
  \end{equation}
\end{enumerate}
In this case, the rebate option price 
\begin{equation}
  \label{eq:appr2}
  \widetilde V^{\beta}(x,t) = \mathbb{E}_{x,t} [ f^\beta(X(T))
  {\bf 1}_{\{\tau^\beta =  T\}}] 
\end{equation}
is associated to  PDE
\begin{equation}
  \label{eq:pdeTb}
  \left\{
    \begin{array}{ll}
      (E)_\beta & \displaystyle u_t + \frac 1 2 \sigma^2(x) u_{xx} = 0
      \hbox{ on }  Q_\beta := (0,\beta) \times (0,T);\\  
      (BD) & u(0,t) = f(0), \ u(\beta, t) = 0,  \ \forall t\in (0,T);\\
      (TD)_\beta & u(x, T) = f^\beta (x), \ \forall x\in [0, \beta].      
    \end{array} \right.
\end{equation}

Observe that, the revised terminal data $f^\beta$ not only makes the
terminal-boundary data continuous at the corner $(\beta, T)$, but also
preserves  H\"{o}lder regularity of the original terminal data $f$
regardless how large the value $\beta$ is.

Although PDE \eqref{eq:pdeTb} is degenerate at $x=0$, one can still
has unique classical solution by utilizing Shauder's interior
estimate. Also, its solution is
indeed equal to the revised rebate option price $\widetilde V^\beta$ of
\eqref{eq:appr2}, see Lemma~\ref{lem:app1}. Moreover, by using
comparison principle twice on two different truncated domains, one can
show its unique solution $\widetilde V^\beta$ must be convergent to 
the desired value $V$ of \eqref{eq:V}, 

\begin{theorem}
  \label{thm:contbd}
  Assume (A1-A2). Then, $\widetilde V^{\beta}$ of \eqref{eq:appr2} is 
  the unique $C^{2,1}(Q_\beta) \cap C(\overline Q_\beta)$ solution of
  PDE \eqref{eq:pdeTb}, and
  $$\lim_{\beta \to \infty} \widetilde V^{\beta} (x,t) = V(x,t), \
  \forall (x,t) \in Q.$$
  In addition, if $\gamma<1$ in (A2), then the convergence rate is
  $$|\widetilde V^{\beta} - V|(x,t) \le K \beta^{-1+\gamma}.$$
\end{theorem}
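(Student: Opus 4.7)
The plan is to establish \thmref{thm:contbd} in three ingredients: the PDE representation (which is essentially the content of the forward-referenced \lemref{lem:app1}), pointwise convergence $\widetilde V^{\beta}(x,t)\to V(x,t)$, and finally the convergence rate when $\gamma<1$. The smoothed terminal $f^\beta$ is engineered so that the boundary-terminal datum of \eqref{eq:pdeTb} is continuous on $\overline{Q_\beta}$, since $f^\beta(0) = f(0)$ and $f^\beta(\beta) = 0$; this is precisely what decouples the PDE argument from the probabilistic one.

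For the PDE step I would first solve \eqref{eq:pdeTb} on each non-degenerate subdomain $Q_\beta^\alpha = (\alpha,\beta)\times(0,T)$ with $\alpha>0$ by classical Schauder theory, then pass $\alpha\downarrow 0$ using Schauder interior estimates to extract a $C^{2,1}(Q_\beta)\cap C(\overline{Q_\beta})$ solution $u$, and check the continuity at $x=0$ via barriers, which becomes a regular boundary point thanks to the continuity of $f^\beta$ there together with the absorption of $X$ at zero. To identify $u$ with $\widetilde V^{\beta}$ of \eqref{eq:appr2}, I would apply It\^o's formula to $u(X(s),s)$ stopped at $\tau_{\alpha,\beta} := \tau^\beta\wedge\inf\{s>t:X(s)\le\alpha\}$ so that $\sigma u_x$ is bounded along the stopped trajectory, take expectations to kill the PDE drift, and let $\alpha\downarrow 0$, using the continuity of $u$ up to $x=0$ and the absorbing behavior there to collapse the lateral contribution into $f(0)=f^\beta(0)=u(0,\cdot)$. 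Uniqueness in the declared regularity class is then an immediate consequence of this representation.

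For the convergence and its rate I would decompose
\begin{equation*}
(V-\widetilde V^{\beta})(x,t) = \underbrace{\mathbb{E}_{x,t}[f(X(T)){\bf 1}_{\{\tau^\beta<T\}}]}_{=:I_1(\beta)} + \underbrace{\mathbb{E}_{x,t}[(f-f^\beta)(X(T)){\bf 1}_{\{\tau^\beta=T\}}]}_{=:I_2(\beta)}.
\end{equation*}
The first piece is exactly $V-V^{\beta,0}$ in the notation of \corref{cor:mc}, so $I_1(\beta)\to 0$ unconditionally and $I_1(\beta)=O(\beta^{-(1-\gamma)})$ when $\gamma<1$. For $I_2$, the explicit form \eqref{eq:fbeta} gives $0\le (f-f^\beta)(x)\le f(x){\bf 1}_{\{x>\beta/2\}}$, and path continuity of $X$ forces $\{X(T)>\beta/2\}\subseteq\{\tau^{\beta/2}<T\}$, hence
\begin{equation*}
|I_2(\beta)|\le \mathbb{E}_{x,t}\bigl[f(X(T)){\bf 1}_{\{\tau^{\beta/2}<T\}}\bigr] = (V-V^{\beta/2,0})(x,t),
\end{equation*}
which is again controlled by \corref{cor:mc} applied at the scale $\beta/2$; this delivers both the pointwise limit and the $O(\beta^{-(1-\gamma)})$ rate, and combining $|I_1|+|I_2|$ yields the claimed bound $K\beta^{-1+\gamma}$.

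The main obstacle I anticipate lies entirely in the PDE step: the degeneracy $\sigma(0)=0$ obstructs both the standard Schauder existence and the direct use of It\^o's formula, so the construction must be executed on shrinking subdomains while separately controlling (a) continuity of $u$ at the corner $\{0\}\times[0,T]$, and (b) the contribution of trajectories approaching zero when passing to the limit in It\^o, both of which must be reconciled with the datum $u(0,t)=f(0)$. Once those technicalities are absorbed into \lemref{lem:app1}, the convergence and the rate reduce to the clean bookkeeping outlined above on top of \corref{cor:mc}.
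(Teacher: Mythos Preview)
Your proposal is correct. The PDE step you sketch is essentially the content of \lemref{lem:app1}: the paper too solves \eqref{eq:pdeTb} on the uniformly elliptic strips $Q_\beta^\alpha$, obtains $\alpha$-uniform interior Schauder bounds, extracts a $C^{2,1}$ limit by Arzel\`a--Ascoli, and identifies it with $\widetilde V^\beta$ via $\tau^{\alpha,\beta}\to\tau^\beta$ and dominated convergence (rather than It\^o's formula, but the effect is identical).

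For the convergence and rate, however, your route differs from the paper's. You use the additive decomposition $V-\widetilde V^\beta=I_1+I_2$ and bound each piece directly by $V-V^{\beta,0}$ and $V-V^{\beta/2,0}$ via \corref{cor:mc}, exploiting only the pointwise inequality $0\le f-f^\beta\le f\,{\bf 1}_{(\beta/2,\beta]}$ and the inclusion $\{X(T)>\beta/2\}\subset\{\tau^{\beta/2}<T\}$. The paper instead invokes the comparison principle of \lemref{lem:dpp} twice to sandwich
\[
V^{\beta/2,\varphi_2}(x,t)\;\le\;\widetilde V^{\beta}(x,t)\;\le\;V^{\beta,\varphi_2}(x,t),
\qquad \varphi_2(x,t)=f(x){\bf 1}_{\{t=T\}},
\]
and then lets $\beta\to\infty$ through \thmref{thm:conv}. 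Both arguments land on the same two auxiliary quantities at scales $\beta$ and $\beta/2$, so the rates coincide; your approach is slightly more elementary in that it bypasses the comparison machinery entirely and stays purely probabilistic, while the paper's sandwich delivers the sharper one-sided estimate $|V-\widetilde V^\beta|\le V-V^{\beta/2,0}$ in place of your sum $I_1+I_2$.
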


Thanks to the Theorem~\ref{thm:contbd}, one can use either well
established FDM or FEM on PDE \eqref{eq:pdeTb} for a large $\beta$ to
estimate the smallest superhedging price.

\section{Proof of main results}
In this section, we will first characterize the value function
$V$. Based on the properties of $V$, we can estimate $|V - V^\beta|$
to prove  \thmref{thm:conv},  and $|V - \bar V^\beta|$ to prove
\thmref{thm:contbd}, respectively.
\subsection{Characterization of the option price $V$} 
We have seen that the option price $V$ of \eqref{eq:V} is one of the
solutions of $BS(Q,f)$. To proceed, we need identify which solution
corresponds to the option price $V$ among many.  This enables us to
establish the connection between parabolic partial
differential equation $BS(Q,f)$ and probability representation
\eqref{eq:V}.

\begin{proposition}
  \label{prop:V}
  Assume (A1-A2). Then,  value function  $V$ of \eqref{eq:V} is
  \begin{enumerate}
  \item  the smallest lower-bounded   $C^{2,1}(Q)\cap
    C_\gamma(\overline Q)$ solution of  $BS(Q,f)$.
  \item the unique $C^{2,1}(Q)\cap C(\overline Q)$ solution of
    $BS(Q,f)$ if and    only if $\sigma$ satisfies
    \begin{equation}
      \label{eq:iff}
      \int_1^\infty \frac{x}{\sigma^2(x)} dx = \infty. 
    \end{equation}
  \end{enumerate}
\end{proposition}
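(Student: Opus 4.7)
The plan is to split Proposition \ref{prop:V} into two parts and treat each by essentially one It\^o identity applied to $u(X(\cdot),\cdot)$ stopped at $\tau^\beta\wedge T$. For part (1), I would first verify $V\in C^{2,1}(Q)\cap C_\gamma(\overline Q)$ and that it solves $BS(Q,f)$: the interior regularity and continuity up to $\overline Q$ follow from \cite{ET09} (or a direct Schauder estimate together with path-continuity of $X$), and the $\gamma$-growth is immediate from Jensen's inequality applied to the concave map $y\mapsto y^\gamma$, since $X$ is a nonnegative supermartingale under $\mathbb P$, giving $V(x,t)\le K(1+\mathbb E_{x,t}|X(T)|^\gamma)\le K(1+x^\gamma)$. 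For minimality, pick any lower-bounded $C^{2,1}(Q)\cap C_\gamma(\overline Q)$ solution $u$ with $u\ge -L$. Apply It\^o on $[t,T\wedge\tau_\alpha^\beta]$, where $\tau_\alpha^\beta$ is the first exit of $X$ from $(\alpha,\beta)$, to avoid the degeneracy of $\sigma$ at $0$; the PDE cancels the drift, and boundedness of $u_x\sigma$ on $[\alpha,\beta]\times[0,T]$ makes the stochastic integral a true martingale. Sending $\alpha\downarrow 0$ and using the BD together with dominated convergence (the integrand is bounded by the uniform bound of $u$ on $[0,\beta]\times[0,T]$) yields
\begin{equation*}
u(x,t)=\mathbb E_{x,t}\bigl[u(\beta,\tau^\beta){\bf 1}_{\{\tau^\beta<T\}}\bigr]+\mathbb E_{x,t}\bigl[f(X(T)){\bf 1}_{\{\tau^\beta=T\}}\bigr].
\end{equation*}
The lower bound $u\ge -L$ controls the first term from below by $-L\,\mathbb P_{x,t}(\tau^\beta<T)$, which vanishes as $\beta\to\infty$ by path-continuity of $X$, while monotone convergence on the second term (the events $\{\tau^\beta=T\}$ are increasing in $\beta$ with full measure in the limit) produces $V(x,t)$; hence $u\ge V$.

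For part (2), I would handle the two directions separately. For $(\Rightarrow)$ I argue by contrapositive: if $\int_1^\infty x/\sigma^2(x)\,dx<\infty$, the Feller/\cite{BX10} characterization gives that $X$ is a strict local martingale, so $M(x,t):=\mathbb E_{x,t}[X(T)]$ satisfies $M(x,t)<x$ somewhere in $Q$. Both $M$ and the trivial solution $u(x,t)=x$ satisfy the linear PDE in \eqref{eq:bsq}, so by linearity $h(x,t):=x-M(x,t)$ is a nontrivial nonnegative $C^{2,1}(Q)\cap C(\overline Q)$ solution with zero boundary and zero terminal data. Then $V+c\,h$ is, for any $c>0$, a second $C^{2,1}(Q)\cap C(\overline Q)$ solution of $BS(Q,f)$ distinct from $V$, contradicting uniqueness. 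For $(\Leftarrow)$, assume $\int=\infty$ so $X$ is a true martingale on $[t,T]$. The same It\^o identity applied to $w:=u-V$ (with vanishing BD and TD) collapses to $w(x,t)=\mathbb E_{x,t}[w(\beta,\tau^\beta){\bf 1}_{\{\tau^\beta<T\}}]$. Linear growth of $w$ gives $|w(\beta,\cdot)|\le K(1+\beta)$; optional sampling at $\tau^\beta\wedge T$ for the true martingale $X$ yields $\beta\,\mathbb P_{x,t}(\tau^\beta<T)=x-\mathbb E_{x,t}[X(T){\bf 1}_{\{\tau^\beta\ge T\}}]\to 0$ as $\beta\to\infty$, so $|w(x,t)|\le K(1+\beta)\mathbb P_{x,t}(\tau^\beta<T)\to 0$ and $u=V$.

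The main obstacle I expect is the $(\Leftarrow)$ step: the It\^o identity only yields $w(x,t)=\mathbb E_{x,t}[w(\beta,\tau^\beta){\bf 1}_{\{\tau^\beta<T\}}]$, and promoting this to $w\equiv 0$ requires both an a priori at-most-linear-growth bound on $w$ at the artificial boundary $x=\beta$ and the decay $\beta\,\mathbb P_{x,t}(\tau^\beta<T)\to 0$, the latter being precisely the characterization of the true martingale property provided by $\int_1^\infty x/\sigma^2(x)\,dx=\infty$. Since the proposition is stated in $C^{2,1}(Q)\cap C(\overline Q)$ without an explicit growth class, the growth control is the delicate point; the natural fix is to work implicitly in $C_\gamma(\overline Q)$, which is harmless given (A2) and the fact that $V$ itself lies in this class. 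The remaining technicalities, namely the It\^o localization at the degenerate boundary $x=0$ via $\alpha\downarrow 0$ and the invocation of the Feller-type characterization of martingality for \eqref{eq:sde}, are standard.
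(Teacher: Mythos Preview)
Your argument for part (1) is correct and rests on the same idea as the paper's: apply It\^o to a lower-bounded classical solution along the paths of $X$ to obtain a supermartingale, then compare with $V$. The paper's presentation is slightly more compact: rather than localizing at $\tau_\alpha^\beta$ and passing to limits, it simply notes that $\hat Y(s):=\hat V(X(s),s)$ is a lower-bounded local martingale by It\^o, hence a supermartingale, so $\hat V(x,t)=\hat Y(t)\ge \mathbb E_{x,t}[\hat Y(T)]=\mathbb E_{x,t}[f(X(T))]=V(x,t)$ in one line. Your explicit localization and the limit identity are fine but not needed once you invoke the supermartingale property directly; on the other hand, your version makes the handling of the degenerate boundary at $x=0$ more transparent.

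For part (2) the paper gives no argument of its own and simply defers to \cite{BX10}. Your sketch is therefore going beyond the paper. The $(\Rightarrow)$ direction via the ``martingale defect'' $h(x,t)=x-\mathbb E_{x,t}[X(T)]$ is the standard construction and is correct. In the $(\Leftarrow)$ direction you have correctly isolated the real difficulty: the statement asserts uniqueness in $C^{2,1}(Q)\cap C(\overline Q)$ with no growth hypothesis, whereas your It\^o identity $w(x,t)=\mathbb E_{x,t}[w(\beta,\tau^\beta)\mathbf 1_{\{\tau^\beta<T\}}]$ only yields $w\equiv 0$ once you know $|w(\beta,\cdot)|=o(\beta)$ (using $\beta\,\mathbb P(\tau^\beta<T)\to 0$ from the true-martingale characterization). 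Restricting to $C_\gamma(\overline Q)$, as you suggest, proves a weaker statement than the one asserted; the full result as stated genuinely requires the analysis in \cite{BX10}, which the paper invokes rather than reproduces. So your proof is not wrong, but for $(\Leftarrow)$ it establishes only uniqueness within the linear-growth class, and you should cite \cite{BX10} for the unrestricted version just as the paper does.
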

\begin{proof}
  Theorem 3.2 of \cite{ET09} shows that $V$ is a $C^{2,1}(Q)\cap
  C(\overline Q)$ solution of  
  $BS(Q,f)$. Applying super-martingale property of $X(T)$ and Jensen's
  inequality, the next  derivation
  shows that $V\in C_\gamma(\overline Q)$,
  $$V(x,t) = \mathbb{E}_{x,t} [f(X(T))] \le K (1 +
  \mathbb{E}_{x,t}[X^\gamma(T)]) \le K (1 +x^\gamma ).  $$
  For the necessary and sufficient condition on uniqueness, we refer
  the proof to \cite{BX10}. It remains to show $V$ is the
  smallest lower bounded solution. 
  Sometimes, we use $X$ to denote $X^{x,0}$ without ambiguity in this
  proof. Note that, by path-wise uniqueness of the solution to
  \eqref{eq:sde} 
  $$Y(t) \triangleq  V(X^{x,0}(t),t) = \mathbb{E}
  [f(X^{X^{x,0}(t), t}(T)) | \mathcal{F}_t] = \mathbb{E}
  [f(X^{x,0}(T))|\mathcal{F}_t]$$  
  is a martingale process. Suppose $\hat V\in C^{2,1}(Q)\cap
  C(\overline Q)$ is an arbitrary 
  lower bounded solution of $BS(Q,f)$, then Ito's
  formula applying to $\hat Y(t) \triangleq \hat V(X(t), t)$ leads to 
  $$\hat Y(t) = V(X(0),0) + \int_0^t \hat V_x (X(s),s) \sigma(X(s))
  dW(s),$$ 
  and $\hat Y(t)$ is a lower bounded local martingale, hence  is a
  super-martingale.  Therefore, we have 
  $$\hat Y(0) \ge \mathbb{E}[\hat Y(T)] = \mathbb{E}[f(X(T))] =  Y(0)$$
  and this implies 
  $$\hat V(x,0) \ge V(x,0).$$
  We can similarly prove for   $\hat V(x,t) \ge V(x,t)$ for all $t$.
\end{proof}

\begin{proposition}\label{prop:nnV}
  Assuming (A1-A2), $BS(Q,f)$ only admits non-negative solution
  in the space of lower bounded $C^{2,1}(Q) \cap C(\overline Q)$
  functions. 
\end{proposition}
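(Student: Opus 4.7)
The plan is to reduce the claim to the smallest-solution property already established in \propref{prop:V}(1). Let $\hat V \in C^{2,1}(Q) \cap C(\overline Q)$ be an arbitrary lower-bounded solution of $BS(Q,f)$. First I would invoke the same argument used in the proof of \propref{prop:V}: apply It\^o's formula to $\hat V(X(s),s)$ on $[t,T]$ to see that it is a continuous local martingale, and then use the lower bound on $\hat V$ to upgrade this local martingale to a supermartingale. Taking expectations, using the terminal datum $\hat V(X(T),T) = f(X(T))$ together with the boundary datum $\hat V(0,s) = f(0)$ on the set $\{X(s)=0\}$ (where the process is absorbed by assumption (A1)), one obtains
\begin{equation*}
\hat V(x,t) \;\ge\; \mathbb{E}_{x,t}[f(X(T))] \;=\; V(x,t).
\end{equation*}

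Second, I would observe that assumption (A2) enforces $f : \overline{\mathbb{R}}^+ \to \overline{\mathbb{R}}^+$, i.e. $f\ge 0$, so the probabilistic representation \eqref{eq:V} immediately forces $V(x,t)\ge 0$ on $\overline Q$. Combining the two inequalities yields $\hat V(x,t) \ge V(x,t) \ge 0$ for every $(x,t)\in \overline Q$, which is exactly the proposition.

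The only point that needs verification — and it is the one potential obstacle — is that the comparison $\hat V \ge V$ extracted from \propref{prop:V}(1) does not secretly rely on the $C_\gamma$ growth clause: that clause is a property of $V$ itself, not a hypothesis placed on the competing solution $\hat V$. Re-examining the proof, the super-martingale step uses only lower-boundedness of $\hat V$ along the trajectories of $X$ up to time $T$, which is exactly what is assumed here. Hence the proposition follows with essentially no new computation, as a clean corollary of the characterization of $V$ as the smallest lower-bounded classical solution of $BS(Q,f)$ together with the non-negativity of the payoff $f$.
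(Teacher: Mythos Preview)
Your proposal is correct and follows essentially the same route as the paper: invoke \propref{prop:V} to conclude that $V$ is dominated by any lower-bounded classical solution $\hat V$, then use the non-negativity of $f$ in (A2) to obtain $\hat V \ge V \ge 0$. Your explicit check that the $C_\gamma$ clause in \propref{prop:V}(1) pertains only to $V$ and not to the competitor $\hat V$ is a useful clarification that the paper leaves implicit.
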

\begin{proof}
\propref{prop:V} shows that $V$ is the smallest lower-bounded solution
of PDE. Since $V\ge 0$ by definition of \eqref{eq:V}, it
implies any lower-bounded solution $u$ satisfies  $u\ge V\ge 0$.    
\end{proof}

In  \exmref{exm:cev}, we have seen that $BS(Q,f)$ of CEV model has
multiple solutions. We continue this model to demonstrate
\propref{prop:nnV}, a 
solution smaller than $V$ must  be unbounded from below.
\begin{example}
  By \propref{prop:V}, the explicit solution $V\ge 0$ of \eqref{eq:cev1}
  in CEV model  smallest lower-bounded solution of $BS(Q,f)$. In fact
  one can find, 
  $$v(x,t) = x  \Big(1 - \lambda \Phi\Big(- \frac{1}{x\sqrt{T-t}}\Big)
  \Big), \lambda>2 $$ is a smaller solution, i.e. $v\le V$ in
  $Q$. However, $v$ is not lower-bounded, i.e. $v(x,t) \to -\infty$ as
  $x\to \infty$.
\end{example}

\subsection{Proof of Theorem~\ref{thm:conv}}
Recall that the domain of the value function $V$ is given on the
domain $\overline Q$ of \eqref{eq:bsq}, and its related truncated domain 
$Q_\beta$ is given by \eqref{eq:pdeT}. 
Let $\varphi: \overline Q \to \mathbb{\bar R^+}$ be a measurable
function. We introduce the truncated value function $V^{\beta,
  \varphi}$ for convenience, 
\begin{equation}
  \label{eq:truncv}
  V^{\beta, \varphi}(x,t) = \left\{
  \begin{array}{ll}
    \mathbb{E}_{x,t} [ \varphi(X(\tau^{\beta}), \tau^{\beta})],  & \forall 
    (x,t)\in \overline Q_\beta, \\
    \varphi(x,t) & \ \hbox{ Otherwise. }
  \end{array}\right.
\end{equation}
where the stopping time $\tau^\beta$ of \eqref{eq:taubeta}  is  the
first hitting time to the barrier $\beta$. By the above definition,
$$V^{\beta,  \varphi_1} (x,t) = V^\beta(x,t)$$
for the $V^\beta$ of \eqref{eq:appr1}, if we set 
\begin{equation}
  \label{eq:phi1}
  \varphi_1 (x,t) = g(x) {\bf 1}_{\{t<T\}} + f(x) {\bf 1}_{\{t = T\}}.
\end{equation}

With the above setup, to prove Theorem~\ref{thm:conv}, our goal is to
estimate $|V^{\beta, \varphi_1} - V|$ as $\beta \to \infty$ with
$\varphi_1$ of \eqref{eq:phi1} and the constraint on $g$ given in
Theorem~\ref{thm:conv}. We emphasize here, $\varphi_1$ may not be
continuous up to the boundary, i.e. $\varphi_1 \notin C(\overline{Q})$
when $g(x)< f(x)$ for some $x>0$.

\begin{lemma}
  \label{lem:dpp}
  Assume (A1-A2). Then,
  \begin{enumerate}
  \item $V(x,t) = V^{\beta,V}(x,t)$ for all $0<x<\beta$.
  \item If  $\varphi, \psi: \overline Q \to \mathbb{\bar R^+}$
    are two measurable functions satisfying  $\varphi \ge \psi$ on
    $\partial^* Q_\beta$, 
    then $$V^{\beta, \varphi} \ge V^{\beta, \psi}, \ \forall \beta>0.$$
  \end{enumerate}
\end{lemma}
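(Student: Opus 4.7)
The plan is to handle the two parts separately, leaning on the martingale characterization of $V$ already built in the proof of \propref{prop:V} for part (1), and on a trajectory-based observation about where the stopped process lands for part (2).

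For part (1), the starting point is the identity $Y(s) := V(X(s),s) = \mathbb{E}[f(X(T))\mid \mathcal{F}_s]$ established during the proof of \propref{prop:V}, which makes $Y$ a true martingale (not merely a local one) on $[t,T]$. I would then apply the optional sampling theorem to $Y$ at the bounded stopping time $\tau^\beta \le T$ to obtain
\[
V(x,t) = Y(t) = \mathbb{E}_{x,t}[Y(\tau^\beta)] = \mathbb{E}_{x,t}[V(X(\tau^\beta),\tau^\beta)] = V^{\beta,V}(x,t).
\]
The integrability needed for optional sampling is immediate: on the event $\{s \le \tau^\beta\}$ the stopped process $X(\cdot\wedge\tau^\beta)$ is bounded by $\beta$, and by \propref{prop:V} we have $V\in C_\gamma(\overline Q)$, so $V(X(\cdot\wedge\tau^\beta),\cdot\wedge\tau^\beta)$ is uniformly bounded by $K(1+\beta^\gamma)$. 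Thus the argument is routine once the martingale is in hand.

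For part (2), the essential step is to observe that $(X(\tau^\beta),\tau^\beta)$ lands in the parabolic boundary $\partial^* Q_\beta$ almost surely, after which the claim follows by taking expectation of the pointwise inequality $\varphi(X(\tau^\beta),\tau^\beta) \ge \psi(X(\tau^\beta),\tau^\beta)$. To verify the boundary assertion I would split into cases: on $\{\tau^\beta < T\}$ the trajectory has first reached the level $\beta$, so by path-continuity $X(\tau^\beta)=\beta$ and the point sits on the right face $\{\beta\}\times[0,T]$; on $\{\tau^\beta=T\}$ the point is on the top face $[0,\beta]\times\{T\}$; the degenerate scenario in which $X$ is absorbed at $0$ before $T$ is subsumed in this second case since then $\tau^\beta=T$ and the endpoint lies at the corner $(0,T)$, still inside $\partial^* Q_\beta$.

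I do not expect either part to pose a genuine difficulty. The most delicate point is really a bookkeeping one: confirming that the martingale $Y$ is integrable enough to justify optional sampling (handled by the $\gamma$-growth bound and the boundedness of $X$ until $\tau^\beta$), and articulating clearly why the post-stopping point $(X(\tau^\beta),\tau^\beta)$ never escapes $\partial^* Q_\beta$ despite the absorbing behavior of the SDE at $0$. Once these two observations are written down cleanly, the lemma is immediate.
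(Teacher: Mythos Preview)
Your proposal is correct and follows essentially the same route as the paper. For part (1) the paper writes the chain $V(x,t)=\mathbb{E}_{x,t}[\mathbb{E}[f(X(T))\mid\mathcal{F}_{\tau^\beta}]]=\mathbb{E}_{x,t}[V(X(\tau^\beta),\tau^\beta)]$ via the tower property and strong Markov property, which is exactly your optional-sampling argument rephrased (and your integrability remark, while harmless, is actually unnecessary since $Y$ is a closed Doob martingale); for part (2) the paper simply says monotonicity ``follows directly from the definition of $V^{\beta,\varphi}$,'' and your case analysis of where $(X(\tau^\beta),\tau^\beta)$ lands is just a more explicit unpacking of that same observation.
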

\begin{proof}
  $X^{x,t}$ is the unique strong solution of \eqref{eq:sde} due to
  (A1). Therefore, the conclusion follows from the following simple 
  derivation using   tower   property and strong Markov property:
  $$
  \begin{array}{ll}
    V(x,t) & =  \displaystyle \mathbb{E} [f(X^{x,t}(T))
    |\mathcal{F}_t]\\
    & = \displaystyle \mathbb{E} [ f(X^{x,t}(T))
    |\mathcal{F}_{\tau^\beta}] |\mathcal{F}_t ]\\
    & = \displaystyle \mathbb{E} [ \mathbb{E}[ f(X^{X(\tau^\beta),
      \tau^\beta}(T)) |\mathcal{F}_{\tau^\beta}] |\mathcal{F}_t ]\\
    & = \displaystyle \mathbb{E} [ V(X(\tau^\beta),
    \tau^\beta)|\mathcal{F}_t ]\\
    & = V^{\beta, V}(x,t).
  \end{array}
  $$
  Monotonicity of $V^{\beta, \varphi}$  in $\varphi$ follows directly
  from the definition of $V^{\beta,\varphi}$ of \eqref{eq:truncv}.
\end{proof}

It is noted that, two results of Lemma~\ref{lem:dpp} correspond to
uniqueness and comparison principle of its associated  PDE. However,
we provide the probabilistic proof Lemma~\ref{lem:dpp} , since we want
to cover potentially discontinuous function $\varphi_1$ of
\eqref{eq:phi1}, in which uniqueness may not remain true.

\begin{lemma}
  \label{lem:1}
  Assume (A1-A2) and $g\ge 0$. Then, $V^{\beta,\varphi_1}$ defined by 
  \eqref{eq:truncv} and \eqref{eq:phi1} satisfies  
  $$\lim_{\beta \to \infty} V^{\beta,\varphi_1} (x,t) \ge V(x,t).$$
  In addition, equality holds in the above if and only if
  \begin{equation}
    \label{eq:iff1}
    \lim_{\beta \to \infty} \mathbb{E}_{x,t} \Big[ g(\beta) {\bf
      1}_{\{\tau^\beta<T\}} \Big] = 0 
  \end{equation}
\end{lemma}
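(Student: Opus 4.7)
The plan is to get a clean algebraic identity for $V^{\beta,\varphi_1}(x,t)-V(x,t)$ and then to show that one of its two ingredients vanishes. Starting from \eqref{eq:truncv} applied to $\varphi_1$ of \eqref{eq:phi1}, and observing that $X(\tau^\beta)=\beta$ on $\{\tau^\beta<T\}$ while $\varphi_1(X(T),T)=f(X(T))$ on $\{\tau^\beta=T\}$, I would first record
\[
V^{\beta,\varphi_1}(x,t) \;=\; \mathbb{E}_{x,t}\bigl[g(\beta)\mathbf{1}_{\{\tau^\beta<T\}}\bigr] + \mathbb{E}_{x,t}\bigl[f(X(T))\mathbf{1}_{\{\tau^\beta=T\}}\bigr].
\]
Splitting $V(x,t)=\mathbb{E}_{x,t}[f(X(T))]$ along the same event and subtracting yields the key identity
\[
V^{\beta,\varphi_1}(x,t)-V(x,t) \;=\; \mathbb{E}_{x,t}\bigl[g(\beta)\mathbf{1}_{\{\tau^\beta<T\}}\bigr] - \mathbb{E}_{x,t}\bigl[f(X(T))\mathbf{1}_{\{\tau^\beta<T\}}\bigr].
\]
Both conclusions of the lemma will then fall out, once I establish that the second expectation tends to $0$ as $\beta\to\infty$.

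For that vanishing I would invoke dominated convergence. The almost-sure convergence of the integrand is pure path continuity: for each $\omega$, the continuous map $s\mapsto X^{x,t}(s,\omega)$ is bounded on the compact interval $[t,T]$, so $\tau^\beta(\omega)=T$ as soon as $\beta$ exceeds that bound, and since $\tau^\beta$ is non-decreasing in $\beta$ the events $\{\tau^\beta<T\}$ are nested and shrink to a null set. For the dominating variable take $|f(X(T))|$; integrability follows from (A2) together with the supermartingale property of the nonnegative local martingale $X$. Indeed, by Jensen's inequality applied to the concave map $y\mapsto y^\gamma$ on $[0,\infty)$ (recall $\gamma\in[0,1]$), $\mathbb{E}_{x,t}[X(T)^\gamma]\le \bigl(\mathbb{E}_{x,t}[X(T)]\bigr)^\gamma\le x^\gamma$, so $\mathbb{E}_{x,t}[|f(X(T))|]\le K(1+x^\gamma)<\infty$.

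Once this is in hand the rest is mechanical. Since $g\ge 0$, the first expectation in the identity is non-negative, giving $\liminf_{\beta\to\infty}V^{\beta,\varphi_1}(x,t)\ge V(x,t)$, which is the desired inequality. The same identity shows that $\lim_{\beta\to\infty}V^{\beta,\varphi_1}(x,t)=V(x,t)$ if and only if $\mathbb{E}_{x,t}[g(\beta)\mathbf{1}_{\{\tau^\beta<T\}}]\to 0$, which is exactly \eqref{eq:iff1}. The only non-routine step is the dominated convergence application, and the only substantive subtlety there is the Jensen/supermartingale bound for the dominating variable; the machinery of the earlier subsection (Proposition~\ref{prop:V}, the dynamic programming identity in Lemma~\ref{lem:dpp}) need not be invoked at all for this lemma.
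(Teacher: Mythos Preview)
Your proof is correct and follows essentially the same route as the paper's: decompose $V^{\beta,\varphi_1}$ along $\{\tau^\beta<T\}$ and $\{\tau^\beta=T\}$, use the almost-sure convergence $\tau^\beta\to T$ (non-explosion of $X$) to handle the $f$-term, and read off both conclusions from the resulting identity. The only cosmetic difference is that the paper applies Monotone Convergence to $f(X(T))\mathbf{1}_{\{\tau^\beta=T\}}\uparrow f(X(T))$ rather than your Dominated Convergence on the complementary indicator, which spares it the Jensen/supermartingale integrability check you give for the dominating variable.
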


\begin{proof}
  We start with the following
  observation: The solution $X := X^{t,x}$ of  
  \eqref{eq:sde} does not 
  explode almost surely by \cite[5.5.3]{KS98}, i.e. 
  \begin{equation}
    \label{eq:nonexplode}
    \lim_{\beta \to \infty} \tau^{\beta} = T, \quad
    \hbox{a.s.-}\mathbb{P}
  \end{equation}
  Due to this fact together with Monotone Convergence Theorem, we
  obtain   following identities: 
  \begin{equation}
    \label{eq:lem:1:1}
    \begin{array}{ll}
      \lim_{\beta\to \infty} \mathbb{E}_{x,t} \Big[f(X(T))
      {\bf 1}_{\{\tau^\beta = T\}} \Big]  &= \mathbb{E}_{x,t} \Big[
      \lim_{\beta \to \infty} f(X(T))
      {\bf 1}_{\{\tau^\beta = T\}} \Big] \\ & = \mathbb{E}_{x,t} \Big[
      f(X(T))\Big] = V(x,t).
    \end{array}
  \end{equation}
  By the definition of $\varphi_1$ of \eqref{eq:phi1}, this results in 
  $$
  \begin{array}{ll}
    \displaystyle \lim_{\beta\to \infty} V^{\beta, \varphi_1}(x,t)
    \\  
    \displaystyle = \lim_{\beta\to
      \infty} \mathbb{E}_{x,t} [\varphi_1(X(\tau^\beta), \tau^\beta) 
    {\bf 1}_{\{\tau^\beta<T\}} ] + \lim_{\beta\to
      \infty} \mathbb{E}_{x,t} [\varphi_1(X(\tau^\beta), \tau^\beta)
    {\bf 1}_{\{\tau^\beta = T\}} ] \\
    \displaystyle = \lim_{\beta\to
      \infty} \mathbb{E}_{x,t} [\varphi_1(\beta, \tau^\beta)
    {\bf 1}_{\{\tau^\beta<T\}} ] + \lim_{\beta\to
      \infty} \mathbb{E}_{x,t} [f(X(T)) {\bf 1}_{\{\tau^\beta = T\}} ] \\
    \displaystyle = \lim_{\beta\to
      \infty} \mathbb{E}_{x,t} [g(\beta) {\bf 1}_{\{\tau^\beta<T\}} ]
    + V(x,t). 
  \end{array}
  $$
  Rearranging the above identity, we have
  \begin{equation}
    \label{eq:idval}
    V(x,t) = \lim_{\beta\to \infty} V^{\beta, \varphi_1}(x,t)  -
    \lim_{\beta\to \infty} \mathbb{E}_{x,t} [g(\beta)
    {\bf 1}_{\{\tau^\beta<T\}} ]. 
  \end{equation}
  Note that three terms in \eqref{eq:idval} are all
  non-negative. Hence, $ \lim_{\beta\to \infty} V^{\beta, \varphi_1}(x,t) \ge
  V(x,t)$ and equality  holds if and only if \eqref{eq:iff1} holds. 
\end{proof}

As mentioned in \eqref{eq:nonexplode}, the solution $ X^{x,t}$ of
\eqref{eq:sde} does not  explode almost surely, and this can be
rewritten as
$$\mathbb{P}(\tau^{x,t,\beta}<T) \to 0 \hbox{ as } \beta \to\infty.$$
An interesting question about this is that, how fast does the above
probability converge to zero? The answer to this question is indeed
useful to  obtain the convergence rate of the truncated approximation.  
\begin{proposition}
  \label{prop:mtgl}
  Fix $(x,t)\in Q$ and assume (A1-A2). As $\beta \to \infty$, stopping
  time $\tau^{x,t,\beta}$ of \eqref{eq:taubeta}  satisfies
  \begin{enumerate}
  \item $\mathbb{P}\{\tau^{x,t,\beta} < T\} = O(1/\beta)$. 
  \item   Moreover,  $ \mathbb{P}\{\tau^{x,t,\beta} <
    T\} = o(1/\beta)$ if and only if $\{X^{t,x}(s): t\le s \le T\}$ is a
    martingale .  
  \end{enumerate}
\end{proposition}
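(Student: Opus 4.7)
The plan is to apply the optional sampling theorem to the stopped process $X(\cdot \wedge \tau^{x,t,\beta})$ and then read off both statements from a single identity. Since $X$ is a non-negative local martingale by (A1) (and its absorbing behaviour at zero), the stopped process is a non-negative local martingale bounded above by $\beta$, hence a uniformly integrable true martingale on $[t,T]$. By path continuity, on $\{\tau^{x,t,\beta}<T\}$ we have $X(\tau^{x,t,\beta}) = \beta$. Writing $\tau = \tau^{x,t,\beta}$ for brevity, optional sampling then yields
\begin{equation*}
x \;=\; \mathbb{E}_{x,t}[X(T\wedge \tau)] \;=\; \mathbb{E}_{x,t}\bigl[X(T)\mathbf{1}_{\{\tau \ge T\}}\bigr] \;+\; \beta\, \mathbb{P}_{x,t}(\tau < T),
\end{equation*}
which is the one identity driving everything.

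For part (1), I would just note that the first term on the right is non-negative, whence $\beta\,\mathbb{P}_{x,t}(\tau<T) \le x$ and so $\mathbb{P}_{x,t}(\tau<T) = O(1/\beta)$ with an explicit constant $x$.

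For part (2), rearranging the identity gives
\begin{equation*}
\beta\,\mathbb{P}_{x,t}(\tau<T) \;=\; x - \mathbb{E}_{x,t}\bigl[X(T)\mathbf{1}_{\{\tau\ge T\}}\bigr].
\end{equation*}
Under (A1), the solution does not explode (cf.\ \eqref{eq:nonexplode}), so $\tau^{x,t,\beta} \to T$ almost surely as $\beta \to \infty$, and $\mathbf{1}_{\{\tau \ge T\}} \uparrow 1$ almost surely. Because $X(T) \ge 0$, monotone convergence gives $\mathbb{E}_{x,t}[X(T)\mathbf{1}_{\{\tau \ge T\}}] \uparrow \mathbb{E}_{x,t}[X(T)]$, so
\begin{equation*}
\lim_{\beta \to \infty}\beta\,\mathbb{P}_{x,t}(\tau<T) \;=\; x - \mathbb{E}_{x,t}[X(T)].
\end{equation*}
Since $X$ is a non-negative local martingale and therefore a supermartingale with $\mathbb{E}_{x,t}[X(T)] \le x$, the right-hand side vanishes if and only if $\mathbb{E}_{x,t}[X(T)] = x$, which is exactly the martingale property of $\{X^{x,t}(s):t\le s\le T\}$. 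This yields the $o(1/\beta)$ characterization.

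No step looks genuinely delicate: the main point is simply to recognise that the stopped process is a bounded martingale, so the only things to be careful about are (i) justifying $X(\tau) = \beta$ on $\{\tau<T\}$ via continuity of paths of \eqref{eq:sde}, and (ii) invoking non-explosion from (A1) together with monotonicity of $\mathbf{1}_{\{\tau^\beta \ge T\}}$ in $\beta$ to pass to the limit. Both are standard consequences of what has already been set up in the paper.
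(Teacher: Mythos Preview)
Your proof is correct and follows essentially the same approach as the paper: both hinge on the identity $x=\mathbb{E}_{x,t}[X(\tau^\beta)]$ obtained because the stopped process is a bounded (hence true) martingale, and then pass to the limit $\beta\to\infty$ using non-explosion to reach $\lim_{\beta\to\infty}\beta\,\mathbb{P}\{\tau^\beta<T\}=x-\mathbb{E}_{x,t}[X(T)]$. The only cosmetic difference is that the paper obtains this identity by specializing the earlier relation \eqref{eq:idval} to $f(x)=g(x)=x$, whereas you derive it directly via optional sampling; your route is slightly more self-contained and, as a bonus, gives the explicit non-asymptotic bound $\mathbb{P}\{\tau^\beta<T\}\le x/\beta$ for every $\beta>x$, not just in the limit.
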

\begin{proof} By taking $g(x) = f(x) = x$ in \eqref{eq:idval},
  $$\lim_{\beta \to \infty} \mathbb{E}_{x,t}[X(\tau^\beta)] =
  \lim_{\beta \to \infty} \beta \mathbb{P}\{\tau^{x,t,\beta} < T\}
  + \mathbb{E}_{x,t} [X(T)].$$
  For all $\beta>x$, since $\{X^{x,t}(\tau^\beta\wedge s):s>t\}$ is a 
  bounded local martingale, hence it is martingale. So,
  $\mathbb{E}_{x,t}[X(\tau^\beta)] = x$ for all $\beta>x$. Rearranging
  the above identity,   we have
  \begin{equation}
    \label{eq:lprob}
    \lim_{\beta \to \infty} \beta \mathbb{P}\{\tau^{x,t,\beta} < T\}
    = x -  \mathbb{E}_{x,t} [X(T)]  
  \end{equation}
   \eqref{eq:lprob} implies
   \begin{enumerate}
   \item Since $\mathbb{E}_{x,t} [X(T)]  \ge 0$, $\lim_{\beta \to
       \infty} \beta     \mathbb{P}\{\tau^{x,t,\beta} < T\} \le x
     <\infty$, which shows 
     $\mathbb{P}\{\tau^{x,t,\beta} < T\} = O(1/\beta)$.  
   \item $\{X^{t,x}(s): t\le s \le T\}$ is a martingale  if and only if
     $x =  \mathbb{E}_{x,t} [X(T)]$, if and only if
     $\mathbb{P}\{\tau^{x,t,\beta} < T\} = o(1/\beta)$.   
   \end{enumerate}
 \end{proof}

Finally, we are now ready fo the proof of  Theorem~\ref{thm:conv}.
\begin{proof}[{\bf Proof of Theorem~\ref{thm:conv}}] We first show its
  convergence, then obtain convergence rate.
  \begin{enumerate}
  \item Regarding its convergence, it is enough to verify \eqref{eq:iff1} by
    \lemref{lem:1}.
    Note that
    $$
    \begin{array}{ll}
      \displaystyle \lim_{\beta \to \infty} \mathbb{E}_{x,t} \Big[
      g(\beta) {\bf 1}_{\{\tau^\beta<T\}} \Big]  
      \displaystyle \le
      \lim_{\beta \to \infty} \frac{ g(\beta)}{\beta} \lim_{\beta \to 
        \infty} \beta  \mathbb{P}\{\tau^{x,t,\beta} <  T\}.  
    \end{array}
    $$  
    \begin{enumerate}
    \item If $g$ is of sub-linear growth, then $  \lim_{\beta \to
        \infty} \frac{g(\beta)}{\beta} = 0$. Hence,  \eqref{eq:iff1}
      holds due to the first result of \propref{prop:mtgl};
    \item
      On the other hand, if  $X^{t,x}$ is  a martingale, then  we have  
      $\lim_{\beta \to \infty} \beta \mathbb{P}\{\tau^{x,t,\beta} <  T\} =
      0$ from the second result of \propref{prop:mtgl}, and
      \eqref{eq:iff1} remains true provided that $g$ is of linear
      growth. 
    \end{enumerate}
  \item Since $V (x,t) = V^{\beta, V}(x,t)$ for all $\beta>x$ by
    \lemref{lem:dpp}, we have the following identity:
    $$        (V - V^{\beta, \varphi_1})(x,t) = (V^{\beta, V} -
    V^{\beta, \varphi_1})(x,t)  = \mathbb{E} [ (V - \varphi_1)
    (X(\tau^\beta), \tau^\beta) {\bf 1}_{\{ \tau^\beta <T \}}]. $$
    Setting $\overline V := \sup_{t\in [0,T)} V(x,t)$, we can rewrite 
    \begin{equation}
      \label{eq:dppV}
        |(V - V^{\beta, \varphi_1})(x,t)| \le (|\overline V| + |g|) (\beta)
        \mathbb{E}  [ {\bf 1}_{\{ \tau^\beta <T \}}].
    \end{equation}
    Since $\overline V \in C_\gamma(\mathbb{R}^+)$ by \propref{prop:V}
    and $g \in D_\eta(\mathbb{R}^+)$, we have $|\overline
    {V} + g | \in  D_{\gamma\vee \eta} (\mathbb{R}^+)$. Hence, write
    \eqref{eq:dppV} by Proposition~\ref{prop:mtgl}
      $$|(V - V^{\beta, \varphi_1})(x,t)| \le  (|\overline{V}| +
      |g|)(\beta)| O(1/\beta) \le K   \beta^{(\gamma \vee \eta) - 1}, 
      $$
    which finally results in \eqref{eq:rate}.
  \end{enumerate}
\end{proof}

\subsection{Proof of Theorem~\ref{thm:contbd}}

\begin{lemma}
  \label{lem:app1}
  Assume (A1-A2). Then $\widetilde V^{\beta}$ of \eqref{eq:truncv} is 
  the unique solution of \eqref{eq:pdeTb} in the space of
  $C^{2,1}(Q_\beta)  \cap  C(\overline  Q_\beta)$. 
\end{lemma}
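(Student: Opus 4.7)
The plan is to establish existence and uniqueness separately, leveraging the probabilistic representation \eqref{eq:appr2} together with the fact that $\sigma^2$ is H\"older continuous and strictly positive on every compact subset of $(0,\beta)$.

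For existence, I take $\widetilde V^{\beta}$ defined by \eqref{eq:appr2} as the candidate solution and verify that it belongs to $C^{2,1}(Q_\beta)\cap C(\overline Q_\beta)$ and satisfies PDE \eqref{eq:pdeTb}. Interior $C^{2,1}$-regularity is obtained by solving the Cauchy-Dirichlet problem with matching boundary data on a small neighborhood of an arbitrary interior point, where the equation is uniformly parabolic with H\"older coefficients so that classical Schauder theory applies; this local smooth solution is then identified with $\widetilde V^{\beta}$ via strong Markov and the martingale property of the associated conditional expectation. A direct It\^o / Feynman-Kac computation then confirms $(E)_\beta$ in the interior. Boundary and terminal values are verified pointwise from \eqref{eq:appr2}: at $x=0$ the absorbing state forces $X\equiv 0$, so $\widetilde V^{\beta}(0,t) = f^\beta(0) = f(0)$; at $x=\beta$ we have $\tau^\beta = t$, so $\widetilde V^{\beta}(\beta, t) = 0$; and $\widetilde V^{\beta}(x,T) = f^\beta(x)$ for $x < \beta$, with the matching value $0$ at $x = \beta$ since $f^\beta(\beta) = 0$. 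Continuity up to $t = T$ and up to $x = \beta$ follows from dominated convergence together with the short exit time from neighborhoods of $\beta$ under the non-degenerate diffusion, while continuity at $x = 0$ uses the absorbing behaviour of $0$, the continuity of $f$ at $0$, and the identity $f^\beta(0) = f(0)$.

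For uniqueness, suppose $u_1, u_2 \in C^{2,1}(Q_\beta) \cap C(\overline Q_\beta)$ both solve \eqref{eq:pdeTb}, and set $w = u_1 - u_2$. Then $w$ solves the homogeneous PDE with vanishing data on the parabolic boundary of $Q_\beta$. Let $\tau^0$ denote the first hitting time of $0$ by $X$ and set $\tau = \tau^\beta \wedge \tau^0 \wedge T$. Applying It\^o's formula to $w(X(s \wedge \tau), s \wedge \tau)$ produces a local martingale which is bounded because $w$ is continuous on the compact set $\overline Q_\beta$, hence a true martingale. At time $\tau$ the pair $(X(\tau), \tau)$ lies on the parabolic boundary where $w$ vanishes, so
\[
w(x,t) = \mathbb{E}_{x,t}[w(X(\tau), \tau)] = 0,
\]
forcing $u_1 \equiv u_2$.

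The main obstacle is the degeneracy of $\sigma$ at $x = 0$, which precludes a direct application of standard parabolic existence theory on all of $Q_\beta$: both the interior $C^{2,1}$ regularity and the existence argument must be pieced together from Schauder estimates on subrectangles bounded away from $x = 0$, while continuity of $\widetilde V^{\beta}$ up to $x = 0$ has to be extracted from the absorbing boundary behaviour of the SDE rather than from ellipticity. The revision of the terminal data to $f^\beta$ plays a crucial role here, since enforcing $f^\beta(\beta) = 0$ makes the boundary-terminal data continuous at the corner $(\beta, T)$; without this, the solution would generically fail to lie in $C(\overline Q_\beta)$ and the whole program would collapse.
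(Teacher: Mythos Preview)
Your proposal is correct in outline but follows a different route from the paper. The paper proceeds by an approximation--compactness scheme: for each $\alpha\in(0,x/2)$ it introduces $V^{\alpha,\beta}(x,t)=\mathbb{E}_{x,t}[f^\beta(X(\tau^{\alpha,\beta}))]$ on the uniformly parabolic strip $Q_\beta^\alpha=(\alpha,\beta)\times(0,T)$, applies weighted interior Schauder estimates (Lieberman, Theorems~4.9 and~5.9) to get $|V^{\alpha,\beta}|^*_{2.5,Q_\beta^{x/2}}\le K$ uniformly in $\alpha$, extracts a $C^{2,1}$ limit via Arzel\`a--Ascoli, and identifies this limit with $\widetilde V^\beta$ by dominated convergence; uniqueness comes from the maximum principle. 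Your argument instead verifies regularity of $\widetilde V^\beta$ directly by solving a local Dirichlet problem on a subrectangle and matching via strong Markov, and handles uniqueness probabilistically through It\^o applied to $u_1-u_2$. This is conceptually cleaner and avoids weighted H\"older norms and compactness, but one step should be made explicit: your local Dirichlet problem takes $\widetilde V^\beta|_{\partial^* R}$ as lateral boundary data, which presupposes that $\widetilde V^\beta$ is already continuous in the interior of $Q_\beta$. You need to establish that continuity first (for instance from the Feller property of $X$ and dominated convergence) before invoking Schauder theory, or the argument becomes circular; the paper's approximation scheme sidesteps this by delivering continuity and $C^{2,1}$ regularity simultaneously as a consequence of the uniform estimates.
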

\begin{proof}
  Fix $(x,t) \in Q_\beta$. Take $\alpha \in (0, x/2)$. Recall 
  $Q_\beta^\alpha = Q_\beta \cap (\overline Q_\alpha)^c$  be an open
  set. Also define
  $$\tau^{\alpha,\beta} = \inf\{s>t: (X^{x,t}(s),s) \notin
  Q^\alpha_\beta\}.$$ 
  Due to the uniform ellipticity, 
  \begin{equation}
    \label{eq:Vab}
  V^{\alpha, \beta} (x,t) :=
  \mathbb{E}_{x,t} [f^\beta(X(\tau^{\alpha,\beta}))]
  \end{equation}
  is the unique classical solution of 
  \begin{equation}
    \label{eq:pdealbe}
    \left\{
      \begin{array}{ll}
        u_t + \frac 1 2 \sigma^2 (x) u_{xx} = 0,  & \hbox{ on }
        Q^\alpha_{\beta} = (\alpha,\beta)\times (0,T)
        \\
        u(\beta,t) = 0, \ u(\alpha, t) = f^\beta(\alpha), & \ \forall
        t\in (0,T) \\
        u(x,T) = f^\beta(x), & \ \forall x\in [\alpha, \beta].
      \end{array}\right.
  \end{equation}
  If we restrict $V^{\alpha, \beta}$ on the subdomain $Q^{x/2}_\beta$,
  it solves following PDE uniquely,
  \begin{equation}
    \label{eq:pdealbe1}
    \left\{
      \begin{array}{ll}
        u_t + \frac 1 2 \sigma^2 (x) u_{xx} = 0,  & \hbox{ on }
        Q^{x/2}_{\beta} = (x/2,\beta)\times (0,T)
        \\
        u(\beta,t) = 0, \ u(x/2, t) = V^{\alpha,\beta}(x/2,t), & \ \forall
        t\in (0,T) \\
        u(x,T) = f^\beta(x), & \ \forall x\in [x/2, \beta].
      \end{array}\right.
  \end{equation}
  Furthermore, by using Shauder estimate Theorem 4.9 together with
  Theorem 5.9 of \cite{Lie96}, one can have estimate on weighted
  H\"older norm, i.e.
  $$|V^{\alpha, \beta}|^*_{2.5, Q_\beta^{x/2}} \le K |V^{\alpha,
    \beta}|_{0,  Q_\beta^{x/2}}  $$
  for some constant $K$ independent to $\alpha$. On the other hand, by
  definition \eqref{eq:Vab}, we have
  $$|V^{\alpha, \beta}|_{0,  Q_\beta^{x/2}}  = \sup_{Q_\beta^{x/2}}
  V^{\alpha, \beta} \le \sup_{x\in [0, \beta]} |f^\beta(x)| \le
  \sup_{x\in [0, \beta]} |f^\beta(x)|  \le K$$
  for some $K$ independent to $\alpha$.   Let $d = \frac 1 2 \min\{x,
  \beta -x, t, T-t\}$,   which must be 
  less than the minimum distance of $x$ to any point in the parabolic 
  boundary $\partial^* Q_\beta^\alpha$. Consider a neighborhood of $x$
  given by  $N_x(r) := (x- r, x+ r) \times (t- r, t+r)$. By definition
  of the weighted norm (Page 47 of \cite{Lie96}),   we finally have
  the following $\alpha$-uniform estimate on $N_x(d)$, 
  $$|V^{\alpha, \beta}|_{2.5, N_x(d)} \le
  |V^{\alpha, \beta}|^*_{2.5, Q_\beta^{x/2}} \le K$$ 
  Therefore, Arzela-Ascoli Theorem implies that there exists a
  subsequence of $\{V^{\alpha, \beta}: \alpha \in (0,x/2)\}$,
  which is uniformly convergent to a function $u$ on
  $N_x(d)$, i.e.
  $$V^{\alpha,\beta} \to u \hbox{ as } \alpha \to 0, \ \hbox{
    uniformly on }  N_x(d).$$
  The uniform convergence implies that the limit function is $u\in
  C^{2,1}(N_x(d))$. 
  Using the facts of almost sure convergence
  $\tau^{\alpha,\beta} \to \tau^\beta$, together with dominated
  convergence theorem, one can check that 
  $$\lim_{\alpha\to 0} V^{\alpha, \beta} (x,t)
  = \mathbb{E}_{x,t} [ \lim_{\alpha \to 0}
  f^\beta (X(\tau^{\alpha,\beta}))]  = \widetilde V^{\beta}(x,t), \
  \hbox{pointwisely}.$$
  Hence, $V^{\beta} = u$ solves $(E)_\beta$  of  \eqref{eq:pdeTb} in the
  classical sense.  By bounded convergence theorem, one can also
  show $\widetilde V^\beta(x,t) \in C(\overline Q_\beta)$ from the
  facts
  \begin{equation}
    \label{eq:lim}
    \lim_{x\to \beta} \widetilde V^\beta(x,t) = 0, \
    \lim_{x\to 0} \widetilde V^\beta(x,t) = f(0), 
    \lim_{t\to T} \widetilde V^\beta(x,t) = f(x), 
  \end{equation}
  Thus, we conclude $\widetilde V^\beta$ is the classical solution of
  \eqref{eq:pdeTb}. Moreover, strong solution satisfies maximum
  principle, and hence the uniqueness follows from Corollary 2.4 of
  \cite{Lie96}.  
\end{proof}

Now, we are ready to prove  Theorem~\ref{thm:contbd}.
\begin{proof}[{\bf Proof of Theorem~\ref{thm:contbd}}] 
  $\widetilde V^\beta$ is the unique solution of
  \eqref{eq:pdeTb} by \lemref{lem:app1}. 
  Fix $(x_0,t_0) \in Q$ and $\beta>2x_0$. We will use comparison
  principle of Lemma~\ref{lem:dpp} twice to obtain the desired
  results.
  Define
  $\varphi_2: \overline Q \mapsto \mathbb{R}$ by 
  $$\varphi_2 (x,t) = f(x) {\bf 1}_{\{t = T\}}. $$
  Since $\varphi_2 (x,t) \le \widetilde V^\beta$ on
  $\partial^* Q_{\beta/2}$, we can apply Lemma~\ref{lem:dpp} on
  $\overline Q_{\beta/2}$ to obtain $V^{\beta/2, \varphi_2} (x_0,t_0) \le
  \widetilde V^\beta (x_0,t_0)$.
  Similarly, since $\widetilde V^\beta \le \varphi_2(x,t)$ on
  $\partial^* Q_{\beta}$ by its definition, we 
  apply  Lemma~\ref{lem:dpp} on
  $\overline Q_{\beta}$ to obtain $V^{\beta, \varphi_2}  (x_0,t_0) \ge
  \widetilde V^\beta (x_0,t_0)$. Thus, we have inequality 
  \begin{equation}
    \label{eq:1}
    V^{\beta/2, \varphi_2} (x_0,t_0) \le \widetilde V^{\beta} (x_0,t_0) \le
    V^{\beta, \varphi_2}(x_0,t_0). 
  \end{equation}
  Taking $\lim_{\beta\to \infty}$ in the above inequality and using
  \thmref{thm:conv}, all three terms shall converge to the  same value
  $V(x_0,t_0)$.  The rate of the convergence is the
  combined result of \eqref{eq:1} and \eqref{eq:rate}.
\end{proof}

\section{Further remarks}
This paper studies an approximation to the smallest hedging price of
European option using rebate options. 
From mathematical point of view, this work concerns on the approximation of
the value function $V$ of \eqref{eq:V} by truncating the domain $Q$
and imposing suitable Cauchy-Dirichlet data $g$. 

The main result on the convergence \thmref{thm:conv} provides that, if
the 
function $g$ is chosen to satisfy sublinear growth in $x$ uniformly in
$t\in [0,T)$, then the truncated value $V^{\beta, g}$ converges to
$V$. This enables practitioners to adopt EM methods on big enough
truncated domain $Q_\beta$ to get a close value of $V$, as demonstrated
in \exmref{exm:EMT}.  

On the other hand, to adopt numerical PDE techniques, continuous
Cauchy-Dirichlet data is desired to get a good approximation. 
However, if the payoff $f$ is given as of a linear growth, $g$ is
taken 
as of a sublinear growth in $x$ for the purpose of the convergence by 
\thmref{thm:conv}, then it's not possible to have a  continuous
solution of Black-Scholes PDE. Alternatively, we provide a continuous
Cauchy-Dirichlet data by modifying the terminal payoff appropriately.

\bibliographystyle{plain}
%\bibliography{/Users/songqsh/Documents/R/notes/refs}
%\end{document}

\def\cprime{$'$}

\end{document}